\documentclass[letterpaper,twocolumn,10pt]{article}
\usepackage{usenix2019_v3}


\usepackage{amsmath}
\usepackage{amsthm}
\usepackage{amssymb}
\usepackage{bm}
\usepackage{bookmark}
\usepackage{dashbox}
\usepackage[T1]{fontenc}
\usepackage{graphicx}
\usepackage[utf8]{inputenc}
\usepackage{mathtools}
\usepackage{mathrsfs}
\usepackage{marvosym}
\usepackage{newunicodechar}
\usepackage{tabu}
\usepackage{thm-restate}
\usepackage{xspace}
\usepackage{xcolor}
\usepackage{stmaryrd}
\usepackage{eufrak}
\usepackage{amsmath}
\usepackage{colortbl}
\usepackage{subcaption}
\usepackage{enumitem}
 
\newcommand*{\smparagraph}[1]{\vspace{0.2cm}\noindent{{\bf #1}: }}
\newcommand*{\itparagraph}[1]{\vspace{0.2cm}\noindent{{\it #1}: }}


\makeatletter
\renewcommand{\boxed}[1]{\text{\fboxsep=.2em\fbox{\m@th$\displaystyle#1$}}}
\makeatother

\newcommand*\adv{\mathcal{A}}

\newcommand*\manu{\mathcal{M}}

\newtheorem{assumption}{Assumption}
\theoremstyle{definition}
\newtheorem{definition}{Definition}

\newcommand*\ids{I}
\newcommand*\pcorrupt{C}
\newcommand*\secprob{p}

\newcommand{\define}{:=} 

\newcommand{\numcusto}{n} 
\newcommand{\numdel}{m} 
\newcommand*\numd{N}  
\newcommand{\threshdel}{t} 

\newcommand*\scheme{\textsc{JJE}\xspace}
\newcommand{\tok}{\phi} 
\newcommand{\nonce}{\alpha} 
\newcommand*\device{\mathcal{D}} 
\newcommand*\locdevice{\boxed{\device}} 
\newcommand*\custo{\mathcal{C}} 
\newcommand*\deviceid{\mathsf{IMEI}} 
\newcommand*\enc[1]{\boldsymbol{[}{#1}\boldsymbol{]_\text{\lock}}} 
\newcommand*\delid[1]{\mathsf{DelegateID}_{#1}} 
\newcommand*\keydb{\mathcal{L}} 
\newcommand*\sig{\sigma} 
\newcommand*\epoch{\tau} 
\newcommand*\chal{\texttt{chal}} 
\newcommand*\pk{\mathsf{pk}} 
\newcommand*\vk{\mathsf{vk}} 
\newcommand*\sk{\mathsf{sk}} 
\newcommand*\vkm{\vk_{\manu}} 
\newcommand*\skm{\sk_{\manu}} 
\newcommand*\vkj{\vk_{\custo}} 
\newcommand*\pkj{\pk_{\custo}}  
\newcommand*\skj{\sk_{\custo}}  
\newcommand*\vkd{\vk_{\device}} 
\newcommand*\skd{\sk_{\device}} 
\newcommand*\delselmal{\Gamma} 
\newcommand*\delsel{\Sigma} 
\newcommand*\keydbheader{H}  
\newcommand*\header{(\MerkleDigest(\keydb)||\numd||\epoch)}
\newcommand*\headerepoch{(\MerkleDigest(\keydb_\epoch)||\numd_\epoch||\epoch)}

\newcommand*\CustodianConsensus{\textsf{CustodianConsensus}}
\newcommand*\SystemSetup{\textsf{SystemSetup}}
\newcommand*\DeviceSetup{\textsf{DeviceSetup}}
\newcommand*\DelegateRegister{\textsf{DelegateRegister}}
\newcommand*\SelectDelegation{\textsf{SelectDelegation}}
\newcommand*\RevealChallenge{\textsf{RevealChallenge}}
\newcommand*\DeviceUnlock{\textsf{RevealUnlockToken}}
\newcommand*\DelegateSignRequest{\textsf{SignRequest}}
\newcommand*\MerkleVerify{\texttt{MerkleVer}}
\newcommand*\MerkleDigest{\texttt{MerkleHash}}
\newcommand*\MerklePath{\texttt{MerklePath}}
\newcommand*\MerkleProof{\Pi}

\hypersetup{linkcolor=black}


\begin{document}
 
\date{}

\title{\Large \bf Judge, Jury \& Encryptioner:\\
  Exceptional Device Access with a Social Cost}

\author{
{\rm Sacha Servan-Schreiber}\thanks{Both authors contributed equally to this work.}~~\thanks{3s@mit.edu}~~$^{(\text{\Letter})}$\\
{\rm MIT CSAIL}\\
\and
{\rm Archer Wheeler}\footnotemark[1]~~\thanks{archer\_wheeler@brown.edu}\\\
{\rm Brown University}\\
} 

\maketitle

\begin{abstract}

We present Judge, Jury and Encryptioner (\scheme) an exceptional access scheme for unlocking devices that does not give unilateral power to any single authority. 
\scheme achieves this by placing final approval to unlock a device in the hands of peer devices.
\scheme distributes maintenance of the protocol across a network of ``custodians'' such as courts, government agencies, civil rights watchdogs, and academic institutions.
Unlock requests, however, can only be approved by a \emph{randomly selected} set of recently active peer devices that must be \emph{physically located} by law enforcement in order to gain access to the locked device. 
This requires that law enforcement expend both human and monetary resources and pay a ``social cost'' in order to find and request the participation of random device owners in the unlock process.

\smallskip

Compared to other proposed exceptional access schemes, we believe that \scheme mitigates the risk of mass surveillance, law enforcement abuse, and vulnerability to unlawful attackers. 
While we propose a concrete construction, our primary goal with \scheme is to spur discussion on \emph{ethical} exceptional access schemes that balance privacy of individuals and the desires of law enforcement. 
\scheme transparently reveals the use of exceptional access to the public and enforces a fixed social cost that, we believe, can be an effective deterrent to mass surveillance and abuse.

\end{abstract}


\section{Introduction} \label{sec:intro}
 On December 7th, 2018, Australia passed a series of controversial and contested laws 
 designed to compel technology companies to provide law enforcement and government 
 agencies access to all encrypted communications~\cite{bbc2018australia, newman2018australia}. 
 In September of 2019, the U.K. government ruled that Facebook and WhatsApp are required to 
 share private messages with the police to assist in criminal investigations~\cite{castro2019}.
 More recently, the Attorney General of the United States called on Apple to unlock two iphones owned by a terrorist with no clear plan for how to do so without jeopardizing the security and privacy of law abiding citizens~\cite{benner_2020}.
 
These events raise important security, political, and ethical questions for both the åcryptography and technology policy communities. 
This recent legislation in both Australia and the U.K. is of great interest to the Five Eyes intelligence compact (U.S., U.K., Australia, Canada \& New Zealand). 
In 2018, the Five Eyes issued a joint statement advocating for lawful electronic interception~\cite{fiveeyes}. 
These recent developments raise the fear that other countries, specifically the United States, will enact similar laws to that of Australia and the U.K.~\cite{wiredglobalpriv}.

The advent of encryption-by-default for many smartphones and instant messaging applications has left law enforcement and intelligence agencies worried that many of their surveillance tools, purportedly used for national security, could ``go dark''~\cite{fiveeyes}. 
As several recent events have demonstrated, law enforcement and government officials firmly believe that they are entitled to ``exceptional access'' to encrypted data and devices, sparking significant pushback by both the private sector and the academic cryptography community~\cite{fiveeyes, weitzner2018encryption}. 

There is an understanding among the security and privacy experts that building ``backdoors'' into devices fundamentally compromises security~\cite{abelson2015keys, buchanan2016cybersecurity}. 
Unfortunately, the technical nuances of this reasoning is often lost in the policy making process. 
Moreover, the vocal opposition to giving law enforcement access to devices has not been successful in stopping countries from pushing for more legislation surrounding intercept technologies~\cite{wiredglobalpriv, bbc2018australia, newman2018australia}. 
It is therefore imperative to understand the consequences and discuss practical solutions to this problem. 
{\bf Our goal is not to advocate for the deployment of exceptional access but rather to instigate discussion into the trade offs needed to build ethical exceptional access schemes.}

As recently pointed out by Ray Ozzie~\cite{ozzieclear}, it is the responsibility of the cryptography community to ensure that backdoors are not implemented in a manner which could have serious long-term security implications. 
Our goal is to address the concerns of law enforcement at face value while still constructing a system which mitigates the concerns of abuse posed by the cryptographic and policy community. 
We do not expect our proposal will be preferable to law enforcement compared to existing methods of ``backdoors'' or key escrow. Rather, we hope to show that ethical solutions are feasible, and we hope to provide a counter argument to the reasoning that \emph{carte blanche} government backdoors are necessary to maintain effective national security.  
 
In this work, we present an exceptional access scheme, \scheme, that decentralizes the trust and unlock process without placing unilateral power in the government’s hands. 
From a high level, our scheme works as follows: At set up, each device secretly and blindly chooses a number of other anonymous devices to act as its ``access delegates.'' 
Crucially, the device chooses these delegates at random with \emph{no knowledge of who they are selecting} and this selection does not leave the device. 
It then builds a unique exceptional access challenge which can be turned into an unlock token by a subset of the chosen delegates.

To use the exceptional access scheme, law enforcement needs approval from a set of ``custodians'' in order to identify and \emph{physically locate} the delegates. 
Each delegate device provides a necessary signature through a secure hardware protocol that does not compromise the security or privacy of the delegate. 
Once a sufficient number of signatures are obtained, they can be used to unlock the locked device. 
Finding and accessing all of the needed delegate devices requires broad manpower and access to intelligence that only the government has. 
This aspect of \scheme is intangible and difficult to steal even by a determined adversary, making \scheme robust against malicious foreign state actors or other well funded attackers. 

\subsection{Ethical Implications}\label{ethics}
The strong opinion of the cryptography community is that building backdoor technology is inherently misguided as it cannot be done while maintaining the security of cryptography~\cite{levycracking}. 
However, the vocal opposition of the community has not caused the problem to go away~\cite{bbc2018australia, newman2018australia, wiredglobalpriv}. 
It is apparent that there is a disconnect between researchers who deeply understand the problem and legal experts who, perhaps naively,  seek  \emph{ a solution that simply works}. 
What governments want is a slight weakening in individual privacy for the gain of security as a nation~\cite{weitzner2018encryption, edgar2017beyond}. 
On the other hand, many researchers believe that any such ``slight'' weakening exposes the real possibility 
of massive security vulnerabilities and provides governments with yet another tool for mass surveillance~\cite{abelson2015keys}. 

It is tempting for members of the academic cryptography community to answer questions seeking secure backdoors with an unequivocal, ``not possible.'' 
In fact, the technical argument for why this is impossible is a tautology: if you provide access to someone else, then your system is insecure because someone else, possibly less trustworthy, now has access too. 
History has shown that this is not just a problem in theory. In fact, any centralized database, no matter how practically secure, is susceptible to attack~\cite{nakashima2013chinese}. 
While this fact is likely obvious to the technically savvy reader, it is not so inherently obvious to a lay person or to a politician~\cite{weitzner2018encryption} making the privacy community come across as recalcitrant on the subject. 

The situation is complicated by the fact that the political playbook used to argue for the existence of exceptional access plays to fears and emotions. 
A common argument used by policy makers advocating \emph{against encryption} goes as follows: 
Imagine there is a terrorist who will imminently attack unless we can unlock this phone, on the spot, and discover their plan~\cite{wjs2016applevsfbi}. 
By focusing on such an extreme scenario, advocates against strong encryption argue it is reasonable to weaken security \emph{for all devices}.

However, currently deployed and proposed systems allow governments to use the same access system for mass surveillance, or to weaken the legal rights of individuals simply accused of low level crimes~\cite{weitzner2018encryption}.

Ultimately, the legal line between security and privacy will be determined not by cryptographers, but by policy experts, the industry, and consumers. 
However, we believe that it is of crucial importance that the research community engages in a discussion regarding potential solutions in order to attempt to bridge a gap, and that it does so judiciously. 
Our goal is to instigate research of \emph{ethical} exceptional access that is fully transparent to the public and incurs high social costs making it impractical as a tool for mass surveillance. 
We believe that any proposed exceptional access scheme should be  designed, from the start, to be truly exceptional, i.e., they should only be useful in the most extreme of cases.
For a more in-depth discussion on the ethical implications of exceptional access schemes we refer the reader to~\cite{wright2018crypto}.

\subsection{Authors' Opinions}
Considering the controversy surrounding this area of research, we believe it is important to stress that we, the authors, are not advocating legislation mandating exceptional access schemes. 
We firmly believe that attempts to outlaw strong encryption are misguided and dangerous. 
However, we also believe that the cryptographic research community needs to take this problem seriously in order to avoid rushed legislation that gives governments unilateral power to unlock devices. 
While we are aware that credible exceptional access research could be used as a fig leaf to justify backdoors that are secure ``in name only,'' recent events show governments are willing to mandate them regardless of academic research~\cite{wiredglobalpriv, bbc2018australia, newman2018australia}. 
We hope \scheme can help facilitate a discussion into ethical exceptional access and to show that there are more secure and more ethical alternatives to secretive ``backdoors''. 
We believe that an honest discussion will  strengthen the hand of privacy advocates.

\subsection{Our Contributions}
To the best of our knowledge, \scheme is the first exceptional access proposal which is fully decentralized, ensures legal authorities are engaged in the unlock process, and incurs a high operational cost \emph{per device} when used by law enforcement. 
In addition, using \scheme requires a fixed, unavoidable social cost. Our \emph{technical} contributions are twofold. 

\begin{itemize}
	\item We present the first decentralized system for providing exceptional access where accountability is enforced and social costs are imposed on a per device basis;
	\item We extensively analyze our system under a realistic threat model to ensure it does not lend itself as a tool for mass-surveillance and other forms of abuse.
\end{itemize}

\noindent While we believe that \scheme is a novel system for exceptional access which improves upon the existing proposals, there are several limitations to \scheme which we believe are import to address upfront. 

\subsection{Limitations}
\scheme is one of few academic works rigorously addressing the problem of exceptional access to locked devices~\cite{savage2018lawful, wright2018crypto}.
As such, \scheme is necessarily a starting point for future work in this area intersecting cryptography, systems, and policy research.
\scheme should be viewed as a stepping stone and instigator for (academic) discussion rather than a perfect ``deployable tomorrow'' solution to the multifaceted and complex problem of exceptional access. 

\begin{itemize}
	\item \scheme is designed to be used in a country where there exists some level of trust in independent legal or non-government institutions. This may be infeasible in a totalitarian government.

	\item \scheme requires a secure enclave processor. However, without such hardware it is likely a device could be easily unlocked by a sophisticated adversary~\cite{wjs2016applevsfbi, costan2017secure, costan2016sanctum, sau2017survey, levycracking}
	
	\item \scheme could present novel legal questions about international jurisdiction of locked devices. We discuss how \scheme can handle these issues in Section \ref{sec:jurisdiction}. However, this issue fundamentally  arises from denying governments unilateral authority to unlock devices.
\end{itemize}

\noindent We provide additional discussion on limitations in Section~\ref{sec:limitations}.


\section{Related Work}
\label{sec:related}
\smparagraph{Exceptional Access \& Key Escrow}
There have been only a few recent works related to exceptional access. 
The existing work has either focused on providing law enforcement with access to encrypted communications or proposing systems for lawful device unlocking under certain assumptions. 
Older work (circa 1995), has been primarily interested in the problem of \emph{key escrow} rather than access to locked devices. 
This includes work by Bellare and Goldwasser~\cite{bellare1997verifiable, bellare1996encapsulated,goldwasser1998time} where the authors describe several different techniques for key escrow systems, where the trust placed in law enforcement is limited. 
At a high level, they proposes different means of providing a partial secret key to a trusted 3rd party such that after the partial key is recovered, obtaining the full secret key requires computational (and monetary) resources with the hope that such cost would limit mass surveillance. 

It is only recently (circa 2018), following the heavily publicized Apple vs. FBI debacle~\cite{wjs2016applevsfbi}, that the community had revived interest in this esearch area. 
In 2018, Wright and Varia~\cite{wright2018crypto} propose a scheme for giving law enforcement access to encrypted data using time-lock puzzles. 
Their approach, in theory, requires law enforcement to expend significant \emph{monetary} resources to break encryption by requiring abundant computing power and upfront costs for pre-computation. 
These costs, they claim, can be tuned so that each intercepted \emph{message} costs between \$1000 and \$1M to crack. Their approach does not involve any technical means for enforcing transparency or incorporating social costs. 

In the same year, Ozzie~\cite{ozzieclear} proposed a system for unlocking \emph{devices} by placing full trust in the device manufacturer. 
While the proposal lacks much technical detail and analysis, the high-level idea is to have the manufacturer act as a trusted key escrow, acting as an arbiter for the unlock process. 
In the proposed scheme, each device encrypts its own ``master key'' with the manufacturer's public key which can then be decrypted by the manufacturer in cooperation with law enforcement to provide access to the device. 

Savage~\cite{savage2018lawful} expands and formalizes several of Ozzie's ideas to propose a similar system for lawful device unlocking by placing power and trust in the manufacturer's hands. 
The scheme also sets in place several safeguards for preventing mass unlock capabilities by law enforcement, ensuring post unlock transparency, and time vaulting. 
From our perspective, \cite{savage2018lawful} is most related to \scheme, but differs from our approach in that there are no baked-in guarantees of (public) transparency and social cost. 

\smparagraph{Surveillance \& Cryptography}
Orthogonal to \scheme is the topic of \emph{surveillance}. 
Using cryptography for creating accountability in surveillance has been studied in several prior works. 
Kroll~\emph{et al.}~\cite{kroll2014secure} propose a mechanism for providing accountability in the court system using secure multi-party computations to execute warrant requests in a private yet auditable manner. 

Similarly, Bates~\emph{et al.}~\cite{bates2015accountable} propose a system for accountable wiretapping technology with the goal of keeping law enforcement activity in the public record.  
This line of work was followed up more recently by Frankle~\emph{et al.}~\cite{frankle2018practical} where the authors discuss ways in which secret processes can be made publicly auditable without compromising secrecy of law enforcement investigations. 
The system uses a combination of zero-knowledge proofs, a tamper-proof ledger, and multi-party computation to achieve that goal. 

Goldwasser~\emph{et al.}~\cite{goldwasser2017public} describe a cryptographic system for allowing accountability of secret laws related to the Foreign Intelligence Surveillance Act (FISA). 
Their solution uses zero-knowledge proofs to provide public auditing of (secret) data collection and surveillance procedures. 

These works (and many others) provide interesting avenues for exploring extensions to our system, especially in regards to providing different levels of accountability and privacy in exceptional access mechanisms but we do not consider them to be part of the \emph{exceptional access} literature.


\begin{table*}[t]
\centering
\tabulinesep=1.2mm
\begin{tabu} to \textwidth {| X[l] | X[l] | X[l] | X[l]| X[l] |}
\hline
 \textbf{Authorization} 
 & \textbf{Non-scalability} 
 & \textbf{Physical Access} 
 & \textbf{Transparency} 
 & \textbf{Social Cost} \\
 \hline
 Identity of the selected delegation can only be revealed by the custodians. 
 & Finite resources must be expended by law enforcement to unlock each device.
 & Physical possession of a device is required to initiate the unlock process.
 & All unlock request require participation of multiple custodians and peer devices.
 & Exceptional access must incur a social cost on a per-device basis. \\
 
 \hline
\end{tabu}

    \caption{Summary of goals for \scheme.}
    \label{tab:goalsandmethods}
\end{table*}

\section{Overview}
\label{sec:overview}
\subsection{Goals}
The primary goal of any system designed to give legal access to a third party is to avoid abuse and ensure transparent legal compliance. 
The concern of backdoors is their potential for fueling mass surveillance practices by a malicious government agency, rogue employees, and hackers. 
Hence, the foremost priority of designing a lawful access scheme is to bake in limits on the scheme's usage at the design level. 
Ozzie's proposal~\cite{ozzieclear} fails to identify this particular risk, instead relying on the private sector to act as a restrainer. 
Stefan Savage, in his technical discussion~\cite{savage2018lawful}, identifies the crucial need for imposing clear \emph{technical} barriers to government mass surveillance. 
However, the proposed system still requires the private sector to keep the government in check, albeit to a lesser extent by imposing time-lock mechanisms to limit covert unlocking capability. 
We establish the following desirable properties which we believe a lawful device access scheme should achieve.
We provide a summary of the desired properties in Table~\ref{tab:goalsandmethods}.

\smparagraph{Authorization}
Law enforcement must provide proof-of-authorization to unlock a device (e.g., a warrant). 
This is an imperative first step to avoiding abuse of power; however, it is not sufficient on its own as seen from numerous historical precedent~\cite{edgar2017beyond,nsa2013primary}.

\smparagraph{Non-scalability}
Individual device access is infeasible to scale to mass surveillance operations. 
We can do this by requiring law enforcement to expend a finite resource (e.g., physical or monetary) to obtain access. 
The access mechanism should incur a certain \emph{physical} cost to law enforcement and could take the form of officers on the ground or task force organization on a \emph{per device} basis. 
Other forms of physical costs could include \emph{computational} or \emph{monetary} costs, as proposed in~\cite{wright2018crypto}. 

\smparagraph{Social Costs}
While non-scalability via physical costs is important, it is not sufficient on its own. 
A well funded adversary (e.g., a government agency) can adapt by increasing its budget or creating specialized task forces. 
Thus, we additionally require \emph{social costs} --- social and political costs levied through transparency which act as deterrent to mass surveillance. 
With both physical and social costs imposed on the unlock mechanism, the barriers to mass surveillance can be fine-tuned such that governments are unable to orchestrate device access operations \emph{en masse}.

\smparagraph{Physical Access} A related goal which serves the purpose of guaranteeing non-scalability is ensuring that no device unlock procedure can be instigated remotely. 
In other words, physical possession of the device is required to unlock it. 
We therefore desire that law enforcement posses the device prior to being able to issue \emph{any aspect} of the unlock request. 
We propose cryptographic methods that are used in conjunction with existing hardware to achieve this requirement. 

\smparagraph{Transparency}
Finally, it is crucial to maintain transparency of the unlock procedure, i.e., there must be some level of public awareness to the use of the exceptional access scheme in order to avoid covert spying. 
Transparency serves as a deterrent to covert mass surveillance since government agencies can be held accountable for their actions.

\subsection{Why Devices and not Encryption}
A crucial distinction is that \scheme does not affect security of \emph{encryption} itself. 
Much work has been dedicated to arguing why building backdoors into encryption schemes is simply not feasible without compromising security at some level~\cite{abelson2015keys, weitzner2018encryption}. 
This work does not attempt to refute those claims.

This paper addresses the more reasonable demand of accessing \emph{personal devices} that are encrypted by default.
We do not make any attempt to weaken encryption or give law enforcement access to encrypted communication. 
However, it is important to note that in many cases, the information stored on personal devices contains most of the data law enforcement may wish to gather on an individual. 
Encrypted and secure communication applications such as Signal, WhatsApp, and iMessage, by default, \emph{are secure solely on the premise that the device on which they are installed is encrypted and locked}. 
Hence, it is reasonable to assume that with access to personal devices, law enforcement can obtain access to otherwise encrypted data as well.


\section{Framework}
\label{sec:design}
We are now ready to describe the \scheme framework. 
We begin by delineating roles and establishing the assumptions we make in the security model.

\subsection*{Roles \& Threat Model}
\scheme has four entities which interact together while following a specified unlock protocol. 
The entities are \emph{Manufacturer}, \emph{Custodians}, \emph{Law Enforcement} and \emph{Access Delegates}. 
In this subsection, we describe the role of each entity and the trust model we assume for each role.

\smparagraph{Manufacturer}
A device manufacturer is responsible for ensuring each device produced is designed with hardware that supports \scheme and isn't compromised at manufacture time. 
In the most basic terms, a manufacturer must be trusted to make devices that have a secure processor that isn't tampered or vulnerable to attack. 

\itparagraph{Threat Model}
We assume that the manufacturer is honest-but-curious, that is, the manufacturer will not \emph{actively} sabotage the unlock mechanism to either allow or prevent law enforcement from accessing a device~\cite{wjs2016applevsfbi}. 
This is a necessary assumption to make given that a fully malicious manufacturer can compromise the hardware, device encryption, and even plant deliberate software bugs. 
However, given sufficient advances in hardware attestation it is possible a manufacture could prove each device follows \scheme and has not been tampered with~\cite{coker2011principles, brickell2004direct, costan2017secure}.
This could allow the manufacture to be reclassified as malicious under our threat model.

\smparagraph{Custodians}
A custodian corresponds to an entity in possession of a (partial) secret key and is fixed 
at system set up time. We imagine custodians as servers belonging and operated by the government itself (e.g., district courts), nonprofit groups, corporations, or academic institutions. 
We require a group of custodians in our system to approve unlock requests issued by government agencies and law enforcement. 

Specifically, custodians are tasked with the role of recognizing legitimate  unlock requests issued by law enforcement. We imagine that organizations acting as custodians have the legal knowledge and funding to validate court issued subpoenas, and for them to not be easily pressured into cooperation. {\it Importantly, custodians have the power to approve device access requests, but not the power to unlock devices. } 

\itparagraph{Threat Model}
We model individual custodians as fully-malicious (i.e., deviating from protocol by collaborating with law enforcement) but assume that at least one custodian is honest-but-curious and follows the protocol. 
Again, this model captures a real-world expectation of trust placed in many legal systems and public institutions such as civil rights groups. It is important to note that \scheme may be less practical in countries without some institutions capable of standing up to government abuse.
We discuss this limitation in Section~\ref{sec:limitations}.

\smparagraph{Delegates}
Delegates are peer devices that collectively have the power to unlock a \emph{single} device at a time. 
A ``delegation'' consists of a randomly chosen set peer devices that act as a type of key escrow. 
A sufficient coalition of devices from a delegation can unlock a device. 
Though we describe the full technical details in subsequent sections, we stress that \scheme ensures that each device has an independently random and anonymous delegation consisting of other devices participating in the protocol. 
Conceptually, a delegation can be seen as a ``jury'' though we do not require delegates to act as ``jurors'' in the legal sense. 
Importantly, we do not expect the owners of delegate devices to decide what is a lawful or unlawful request. The purpose of delegate involvement is to ensure a social cost and enforce transparency of exceptional access.

\itparagraph{Threat Model}
We assume that any device in a selected delegation could be actively malicious. 
In addition, we expect some devices to have limited connectivity, others to be lost, broken, or otherwise not able to cooperate with the protocol. 
Our only requirement is that most devices are not physically compromised by an adversary. 
We will show that, under a reasonable parameter selection, as long as no more than some percentage of all devices (e.g. 25\%) are corrupted, the system is secure from abuse with high probability. We elaborate on the details of this requirement in Section~\ref{sec:security}.

\smparagraph{Law Enforcement}
Law enforcement wish to obtain access to a locked device. To achieve this, law enforcement interacts with custodians to obtain approval, and subsequently interact with delegates within the selected delegation to obtain an unlock token used for unlocking the device. 
In the process, law enforcement will need physical access to all devices in the delegation in order to obtain the necessary signatures to unlock a device. 

\itparagraph{Threat Model}
We model law enforcement as fully malicious. 
Indeed, a large part of the reason for implementing \scheme is to protect law-abiding citizens from corrupted law-enforcement organizations or rogue officers within an agency. 
Moreover, any malicious adversary such as state sponsored hackers would be equivalent under \scheme to rogue law enforcement attempting to (unlawfully) access the device.

\subsection{Assumptions}
\label{subsec:assumptions}
With the roles and threat model established, we now turn to the assumptions that we make in realizing \scheme. 
We make three assumptions which serve as a foundation for constructing the framework. 
We are careful to justify our trust model based on current \emph{real world} security assumptions made in practice. 
Moreover, these assumptions would be required by \emph{any} practical unlock system, not just \scheme.

\begin{assumption}
\label{assumption:enclave}
	We assume that tamper-reistant, secure enclaves exists and are free of  vulnerabilities.
\end{assumption}

\emph{Justification:}
A device is only as secure as its password. While it is possible to encrypt a device directly, it is impractical for most users to memorize a full encryption key. Some modern devices maintain state of the art security against \emph{physical} attack by using secure enclave co-processors which only run software that has been cryptographically signed by its manufacturer in order to gate access to the true decryption key of the device~\cite{barrett2018titanchip, apple2018t2chip}. 
Indeed, without dedicated tamper resistant hardware, an adversary can be assumed to have the capability of unlocking a device via a brute-force attack, for example.
Because of this, we believe that any exceptional access scheme would be trivially insecure without this requirement.

It is important to note that the term ``secure enclave'' has a somewhat overloaded definition. 
There have been a number of works describing creative attacks on secure processors which allow execution of arbitrary user code, specifically the recent SGX processor developed by 
Intel~\cite{spreitzer2018systematic, mandt2016demystifying, van2018foreshadow}. However, in \scheme it is sufficient for devices to run dedicated software designed by the manufacture at device creation time such as a TPM~\cite{brickell2004direct, apple2018t2chip}. This is the model used by iPhones, 
Macs and some Android phones~\cite{barrett2018titanchip, apple2018t2chip} and is suspected to be difficult to break~\cite{wjs2016applevsfbi}. 
Furthermore, recent work has developed formal guarantees for enclave processors, making them more resilient to attacks~\cite{costan2016sanctum}.

\begin{assumption}
\label{assumption:software}
    We assume that the only two ways to unlock a device is through the intended mechanism (i.e., by providing a password) and the \scheme exceptional access scheme.
\end{assumption}

\emph{Justification:}
In other words, we do not claim that \scheme fixes \emph{other} methods of breaking into a device.  
Therefore for our security analysis we exclude the possibility of breaking into the device via side-channels and zero-day vulnerabilities. While in the real world this is certainly not assured, 
such methods for accessing devices are tangential to the usefulness of \scheme.

\begin{figure*}[t]
\centering
  \includegraphics[width=\linewidth]{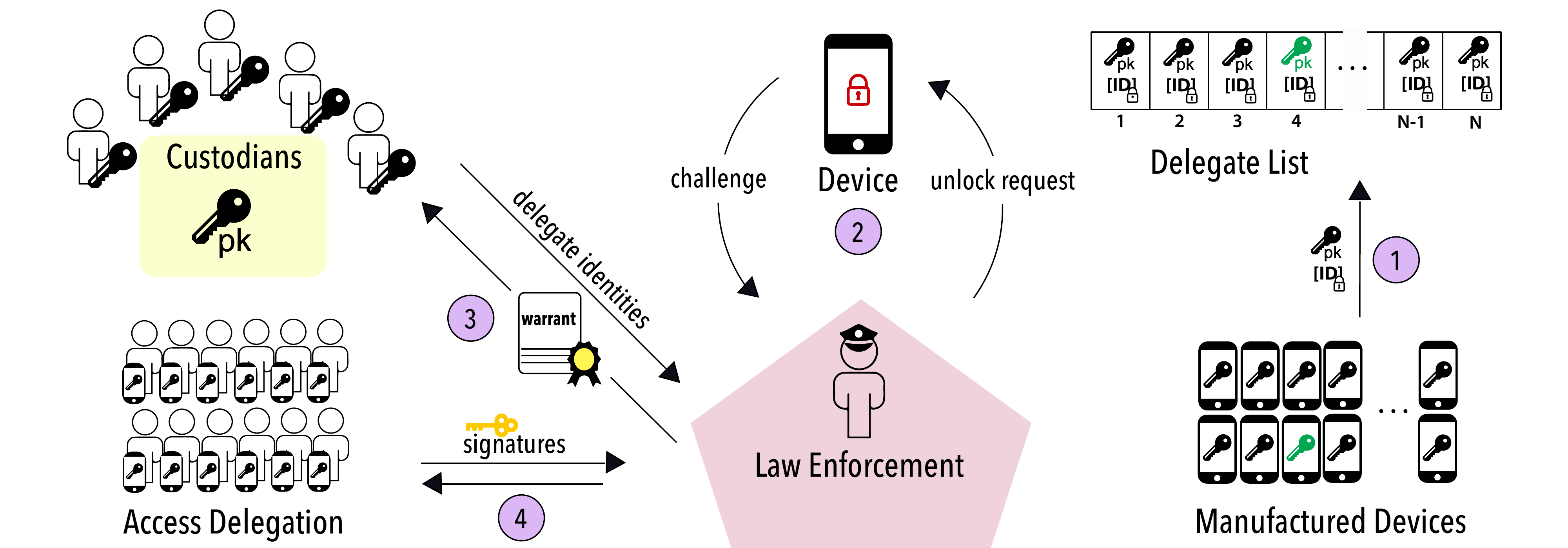}
  \caption{Overview of \scheme protocol and interaction steps necessary to unlock a device. 
  (1) manufactured devices upload a public signature key and encrypted identifier to the delegate list. 
  (2) When an unlock request is issued by law enforcement, the (locked) device selects a random set of indices from the list of delegates which becomes its unlock challenge. 
  (3) Law enforcement requests the identities of the devices located at indices selected by the device. 
  (4) Law enforcement locates the delegates and obtains a set of signatures which are the key to unlocking the device.}
  \label{fig:protocol}
\end{figure*}

\begin{assumption}
\label{assumption:honesty}
    We assume that only devices honestly following the protocol can be unlocked using the \scheme exceptional access protocol.
\end{assumption}

\emph{Justification:} It is impossible to stop a device from furthering its own security given a savvy and determined user. 
For instance, a user can always encrypt important files using a second layer of encryption which \scheme would not provide access to. 
Moreover, a user could trivially wipe the preinstalled operating system and use their own system which does not participate in \scheme. 
We discuss practical means of incentivizing cooperation in Section \ref{ext:participation}.

\subsection{Strawman Proposal}
With the necessary context and assumptions in place, we are now ready to describe an initial 
attempt at a solution which addresses the aforementioned design goals. 
The \scheme framework is divided into four stages which we outline below. 

\smparagraph{Initial Setup}
For simplicity of presentation, assume a set of custodians have among them shares of a secret key $\skj$ such that they can collectively decrypt a ciphertext encrypted under a public key $\pkj$ and sign signatures which verify under a public verification key $\vkj$. 

\smparagraph{Device Setup}
Each manufactured device uses its secure processor to generates a public verification key $\vkd$ along with a secret signing key $\skd$. 
The secret key $\skd$ is stored on the device using \emph{secure sealed storage\footnote{For an example see \url{https://developer.apple.com/documentation/security/certificate_key_and_trust_services/keys/storing_keys_in_the_secure_enclave} for details.}}, managed by the device's secure processor.
The public verification key is published to a public list $\keydb$ maintained by the custodians along with a unique device identifier that can be used to physically track down a device.

\smparagraph{Device Protocol}
Each device selects a randomly chosen subset $\delsel$ of verification keys from $\keydb$ and samples a random nonce $\nonce$. 
To ensure that custodians are required to approve exceptional access requests, the device encrypts the entire set of verification keys it selected using the public key $\pkj$. 
Recall that custodians hold the corresponding shares of the secret key $\skj$. 
The device then outputs the challenge $\chal \define \mathsf{Enc}_{\pkj}(\delsel||\nonce)$ consisting of the encrypted selection along with the nonce. 
For the sake of the strawman argument, we assume that the device is equipped with a read-only mailbox interface accessible via hardware pins~\cite{apple2018t2chip}, which is used exclusively for interfacing with \scheme, and which enables law enforcement to read $\chal$ given physical possession of to the device.

 \smparagraph{Exceptional Access Protocol}
 If law enforcement has physical access to a device, they can read the challenge $\chal$ from the enclave via the secure mailbox interface. 
 Law enforcement then requests the custodians to decrypt $\enc{\delsel||\nonce}$, which, if they agree, allows law enforcement to recover the set of delegate identities selected by the device. 
 Law enforcement then uses the identifying information stored in $\keydb$ to locate the devices selected in $\delsel$. 
 For each device in the set, law enforcement requests that the delegate device to sign the nonce $\nonce$ (output as part of the challenge $\chal$ by the locked device). 
 Call this set of delegate signatures on the nonce $\delselmal$. 
 
 The locked device, when presented with $\delselmal$, locally verifies that the signatures are valid (and match its selection $\delsel$). 
 If the check verifies, the device outputs an access token $\tok$ which allows law enforcement to unlock the device.
\[
	\text{------------------------------}
\]

The high-level strawman solution satisfies the design goals outlined above. Specifically, 
1) authorization is satisfied given that custodians must agree to decrypt the challenge and reveal the identity of the selected delegation; 
2) Non-scalability is achieved by the fact that each device that law enforcement needs unlocked requires a \emph{unique challenge} making the unlock protocol effective only on a \emph{per device} basis; 
3) the physical access requirement is achieved by writing the access challenge to a read-only hardware interface; 
4) transparency is achieved by requiring the involvement of both custodians and randomly chosen set of delegate devices; 
5) social costs are incurred by the need for Law Enforcement to locate a randomly chosen delegation, on a per device basis.

\smparagraph{Problems with Strawman} While the strawman solution described above does achieve the desired properties of Section~\ref{sec:overview}, there are technical questions that must be addressed and fleshed out. Specifically, how is the list $\keydb$ of peer device keys maintained and how does each device obtain it? 
How do we prevent an adversary from inserting malicious delegate keys into the list? 
How do we ensure that devices chosen in the delegation are active (i.e., not lost or broken) and can be located? 
Finally, given the physical access requirement, how can the protocol function well in practice if devices are spread across the world? 
For example, if a device is seized in San Francisco, California, can it securely select a delegation consisting of devices located in, say, the West Coast of the United States? 
We address all these questions in the following sections.

\section{The \scheme Scheme} 
\label{sec:scheme}
In this section, we describe the necessary protocols for realizing \scheme and address the problems described in the strawman scheme. 
To do so, we must introduce some necessary cryptographic and hardware security building blocks we use in our construction. 

\subsection{Cryptographic Building Blocks}
\label{subsec:prelims}
\smparagraph{Secure Processors}
Secure processors such as Apple's T2 Chip~\cite{apple2018t2chip} used in iPhones and MacBooks, and Google's Titan M Chip for high end Android phones~\cite{barrett2018titanchip}, are becoming increasingly prevalent in smartphones and computers alike. 
Such processors enable computations to run securely on the device, isolated from other device components. This enables secure computation even in the presence of adversarial software on the device~\cite{costan2017secure, costan2016sanctum, sau2017survey}. 
What's more, this family of processors are resilient to both hardware and software attacks and engineered to be tamper-resistant~\cite{costan2017secure, costan2016sanctum, sau2017survey}. 
For ease of presentation, we describe our protocol using Apple's T2 terminology, and we make use of Assumption~\ref{assumption:enclave} that the secure processor on the device is \emph{secure}.

\smparagraph{Cryptographic Signatures}
We require a cryptographic signature scheme~\cite{cramer2000signature,hoffstein2001nss} for the purpose of verifying different entities in the system. 
We also require \emph{group} signatures~\cite{boneh2004short}, where a set of entities can generate a signature using shares of a secret key, such that the signature can be verified by a single public key. 

We denote a signature issued by an entity $\mathcal{E}$ in the system by $\sig_{\mathcal{E}}(m)$ where $m$ is the message being signed. 
Anyone can verify the signature $\sig_{\mathcal{E}}(m)$ using the public verification key $\vk_\mathcal{E}$. 

\smparagraph{Merkle Trees}
We use Merkle trees~\cite{merkle1989certified, merkle1987digital} to efficiently validate entries in a list. A Merkle tree consists of three algorithms: $\MerkleDigest$, $\MerklePath$, and $\MerkleVerify$. 
A list of values is hashed using $\MerkleDigest$ which outputs a succinct ``root hash'' of the entire list. A proof that an element is contained in the list can be issued using 
$\MerklePath$ which takes as input an entry from the hashed list and outputs a list of hashes that form a path from the root hash to the entry. 
The proof can be verified efficiently using $\MerkleVerify$ which outputs $1$ if it is given a valid path and $0$ otherwise (i.e., if the element is not in the list). 
Importantly the size of the proof is logarithmic in the size of the list. 

\begin{figure}
{
\def\arraystretch{1.25}
\begin{tabular}{ |c|p{6.5cm}| }
 \multicolumn{2}{c}{\bf Relevant Notation} \\
 \hline
Symbol & Description\\
 \hline
$\deviceid$          & unique IMEI of a device (see Section~\ref{subsec:imei}). \\
$\keydb$             & set of delegate public keys and encrypted IMEIs. \\
$\numd$              & number of delegate devices in the current epoch. \\
$\numdel$            & number of devices in a delegation. \\
$\numcusto$          & number of custodians in the system. \\
$\vk_\mathcal{E}$    & digital signature verification key of entity $\mathcal{E}$. \\
$\sig_{\mathcal{E}}(\cdot)$ & signature under the signing key of entity $\mathcal{E}$.\\
$\pk_\mathcal{E}$    & public encryption key of entity $\mathcal{E}$. \\
$\enc{v}$            & encryption of a value $v$.\\
$\manu$              & device manufacturer. \\
$\device$            & a delegate device in the system. \\
$\locdevice$         & the locked device which law enforcement wants to access. \\
\hline
\end{tabular}
}	
\end{figure}

\subsection{International Mobile Equipment Identity (IMEI)}
\label{subsec:imei}
An IMEI is a unique 15-digit identifier used to identify all mobile phones. 
The IMEI is a \emph{device specific} identifier that is independent of the operating system and cellular network~\cite{scourias1995overview, papadimitratos2006privacy}. 
All devices in a cellular network share their IMEI number with the cellular network provider (e.g., AT\&T) which links it to an International Mobile Subscriber Identity (IMSI)~\cite{scourias1995overview, papadimitratos2006privacy}. 
In this work, we use the fact that network providers know the (IMEI, IMSI) pair for each device in their network and can therefore provide information about the subscriber (i.e., device owner) based on the IMEI. 
For simplicity, we assume that law enforcement already has efficient means to access these records. 
This is a reasonable requirement in practicegiven that the network provider already has the ability to \emph{blacklist} devices based on the IMEI (e.g., if a device is stolen, etc.)~\cite{scourias1995overview, papadimitratos2006privacy}.

\subsection{Epochs}
In order to ensure that most devices in $\keydb$ are operational and that new devices can be added, \scheme is broken into sequential epochs $\epoch \in \{0, 1 \dots\}$.
Each epoch last for some fix amount of time which could be on the order of a week or a month.

During an epoch devices will update themselves to the current list $\keydb_\epoch$ and select delegates from that list.
If a device is issued an unlock request it will use the most recent epoch that it has updated to.
A device does \emph{not} need to be on the current epoch if the device has yet to update. 
Moreover, it is acceptable to have some devices miss epochs due to network failures, power loss, or any number of other reasons.

During epoch $\epoch$, devices will register themselves for the next epoch $\epoch + 1$. At the end of epoch $\epoch$, the custodians will publish a new list $\keydb_{\epoch+1}$ using all devices that have registered during the last epoch. 
This way each device only needs network access at some point during an epoch.

For notational simplicity, we drop the $\epoch$ subscript and denote the number of devices in the current epoch by $\numd$ and the list of delegates by $\keydb$.

\subsection{Custodian Setup}
The set of custodians run a Distributed Key Generation (DKG) protocol~\cite{boneh2001efficient, gennaro2007secure} to obtain a public encryption and verification keys $(\pk_k, \vkj)$ such that each custodian holds a share of the secret key $\skj$.  

Custodians collectively maintain a list of delegates $\keydb$ to which devices can add themselves.
Custodians initialize the first epoch $\epoch = 0$ via the {\bf \CustodianConsensus} protocol.

\subsection{Manufacturer Setup}
\label{subsec:manusetup}
We split manufacturer setup into two protocols: 

\noindent{\bf\SystemSetup} and {\bf \DeviceSetup}. 
The former is used to initialize the system once (i.e., when the manufacturer begins operations), the latter protocol is used to initialize each new manufactured device. \newline

{\noindent\bf \SystemSetup: }
The manufacturer generates a key pair $(\vkm, \skm)$ for a cryptographic signature scheme. 
The public verification key $\vkm$ is made available to all entities in \scheme. 
We note that, in many cases, this setup is already in place as it is needed for manufacturers to sign software updates~\cite{codesigning}.
Therefore, \scheme can be setup on the manufacturer sideusing existing code-signing infrastructure. 
The manufacturer signs the generated public keys of the custodians, $\pkj$ and $\vkj$, which it outputs to all entities in \scheme. 
\newline

{\noindent\bf\DeviceSetup: }
For each manufactured device, the manufacturer interfaces with the device's secure processor via a designated API to request the generation of a new signature key pair. 
The secure processor generates and stores the secret key $\skd$ in \emph{sealed storage}\footnote{Secure encrypted storage managed by the enclave -- $\skd$ never leaves the enclave and remains unknown to the manufacturer.} and outputs the corresponding public verification key $\vkd$ via the secure processor interface. 
The manufacturer signs the device's public key to generate a signature $\sig_{\manu}(\vkd)$ which it sends back to the device's secure processor. 
This signature is subsequently used by the device to prove that it is real device (built by the manufacturer) and not a virtual sybil masquerading as a delegate. 
This process is analogous to existing methods for hardware attestation and widely available at scale~\cite{brickell2004direct, costan2017secure, costan2016sanctum, apple2018t2chip}. 

\subsection{Custodian Protocol}
Over the duration of epoch $\epoch - 1$, custodians accept $\delid{\device}$s from devices. At the end of epoch $\epoch - 1$, custodians collectively perform {\bf CustodianConsensus()} to start a new epoch $\epoch$.\newline

{\noindent \bf\CustodianConsensus:}
At the start of epoch $\epoch$ each custodian has received some number of $\delid{\device}$s from devices. 
Each custodian then validates each $\delid{\device}$ by ensuring it is signed by the manufacture and is not a duplicate.
It then generates a prospective list $\tilde \keydb_\epoch$ of all valid, non-duplicate devices. 
Each custodian then send its prospective list to every other custodian.
Custodians validate every prospective list generated by all other custodians.
Finally, custodians collectively publish the union of all valid devices from each $\tilde \keydb_\epoch$ as $\keydb_\epoch$ and collectively sign it.
Additionally they publish and sign a small metadata header:
\[
  \keydbheader \define \headerepoch
\]
consisting of the root Merkle hash, the number of devices in the epoch, and the current epoch number. 

If at least one custodian does not agree on the collective pick for $\keydb_\epoch$, then everyone aborts the procedure and publishes its chosen list.
This output can be used as a proof of whether a custodian has attempted to publish a non valid list.
Note, that if a malicious custodian chooses to abort the procedure it simply stops the creation of a new epoch. \scheme will still be as secure as the last epoch $\epoch -1$.

\subsection{Device Protocols}
We denote by $\locdevice$ the locked device which law enforcement wants to access.
All delegate devices are denoted by $\device$. 
\label{sec:devproto}
Every device must do the following two things: (1) it must make itself ``available'' as a delegate and (2) it must be able to ``approve'' unlock requests for other devices that have selected it to be part of a delegation. 
We require a protocol by which each device can share some identifying information which can only be revealed by a coalition of custodians. 
Next, we require a protocol for a locked device to securely select a set of delegates without risking man-in-the-middle attacks and without leaking the delegation identity in the process. 
Finally, we require an unlocking protocol by which the locked device reveals a unique access token $\tok$, such as the password, used to unlock it. 

We split these three separate functionalities into the following protocols: {\bf \DelegateRegister,~\SelectDelegation} and {\bf \DeviceUnlock}. \newline

{\noindent\bf $\bm{\device}$.\DelegateRegister$()$: }
The device $\device$ generates a fresh key pair $(\vkd, \skd)$ and encrypts its own $\deviceid$ number using the public key of the custodians to obtain $\enc{\deviceid} \define \mathsf{Enc}_{\pkj}(\deviceid)$. 
The device then generates the tuple
\[
	\delid{\device} \define (\vkd, \enc{\deviceid})
\]
and proceeds to write $\delid{\device}$ to $\keydb$ by sending $\delid{\device}$ to each custodian over a secure channel.
\newline

{\noindent\bf $\bm{\locdevice}$.\SelectDelegation$(\keydbheader )$: }
The device downloads the signed header $\keydbheader$ of the most recent $\keydb_\epoch$. Recall 
  \[
  \keydbheader  \define \header
  \]
The device verifies the header signature using $\vkj$. It then selects indices $\ell_1,\ell_2,\dots,\ell_\numdel$ uniformly at random such that $1\le \ell_i \le \numd$, where $\numd$ is the size of $\keydb$. 
Let $\delsel = \{\ell_1, \dots, \ell_\numdel\}$ be a set of indices which the device stores in secure memory. This \emph{delegate selection} does not leave the device.
\newline

{\noindent\bf $\bm{\locdevice}$.\RevealChallenge$()$: } 
The locked device $\locdevice$ generates a random nonce $\nonce$ and encrypts its selection of delegates along with the nonce under the custodian's encryption key $\enc{\delsel||\nonce} \define \mathsf{Enc}_{\pkj}(\delsel||\nonce)$, and outputs the challenge
\[
\chal \define \enc{\delsel||\nonce}
\]
via the enclaves's secure mailbox interface. 
This protocol is only used if a device is attempted to be unlocked via \scheme. 
Once a device outputs an unlock challenge, it should not update to new epochs for some fixed period to ensure reasonable time to complete the request.
\newline

{\noindent\bf $\bm{\device}$.\DelegateSignRequest$( \nonce )$: }
The delegate device $\device$ signs the input nonce and sends back $\sig_{\device}(\nonce)$ through the secure hardware mailbox interface. 
This process allows law enforcement to obtain the necessary signatures from delegate devices without compromising the security and privacy of delegates.
\newline

{\noindent\bf $\bm{\locdevice}$.\DeviceUnlock$(\delselmal, \MerkleProof)$: } 
When given the set $\delselmal = \{j||\delid{\device_j}||\sig_{\device_j}(\nonce)~|~ j \in \delsel\}$ 
consisting of verification keys (recall that $\delid{\device} \define (\vk_\device, \enc{\deviceid})$) and signatures obtained from a subset of delegate devices in $\keydb$, indexed by the set $\delsel$, and $\MerkleProof = \{\pi_1, \dots, \pi_{|\delselmal|}\}$ consisting of a set of Merkle proofs for each delegate verification key, the device verifies the following:

\begin{enumerate}[label=(\arabic*)]
	\item $|\delselmal| \ge \threshdel$ where $\threshdel$ is the threshold of 
	signatures required for unlocking;
	
	\item $\MerkleVerify(\pi_j, \delid{\device_j}||j) = 1 $ for each element in $\delselmal$;
	
	\item For each signature $\sig_{\device_j}(\nonce) \in \delselmal$, the corresponding verification key $\vk_{\device_j}$ verifies the signature. 
\end{enumerate}

\noindent The verifications ensure that a sufficient coalition of devices from the delegate selection signed the unlock request and, crucially, that each signature obtained was from delegate device indexed by the set $\delsel$ in $\keydb$. 
This latter step ensures that the delegation selection was not ``swapped out'' for a selection $\delsel^* \neq \delsel$.

If the three verification steps pass, $\locdevice$ reveals the unlock token\footnote{The token $\tok$ can be as simple as the device's password.} $\tok$ by writing it to the secure mailbox interface. 
Otherwise, the unlock process fails and $\locdevice$ outputs $\bot$. 

\subsection{Delegate Protocol}
To participate in the system as a delegate, a device $\device$ must add itself to the list of participating devices. 
For each new time epoch, $\device$ runs {\bf\DelegateRegister} to add itself to the list $\keydb$ and makes itself available for selection. 

Devices can be called upon as delegates in order to verify the unlock of other devices through the {\bf\DelegateSignRequest} protocol. 
We stress that the process by which an owner of a delegate devices signs a nonce should be \emph{non-invasive}.
Specifically, there is no need for a device owner in a delegation to unlock their own device (which is acting as a delegate) in order to sign a request. 
The signature can be obtained by interfacing directly with the device's enclave through a physical pin connection or wirelessly using built in RFID.
However, physical access by law enforcement is needed both to incur a social cost and to reduce the risk of side channel attacks but \emph{participation should not incur an invasion of privacy for the delegate device owner.}

\subsection{Warranted Access Protocol} \label{scheme:wap}
We will now describe how law enforcement proceeds to unlock a device through \scheme. 
We assume that law enforcement has \emph{physical access} to $\locdevice$ so that they can interface with the secure processor over a mailbox interface~\cite{costan2016sanctum}. 
Law enforcement proceeds as follows:\newline 

\noindent\textbf{Step 1:} Law enforcement sends a request to the secure hardware enclave via $\RevealChallenge$ and obtains
the challenge $\chal \define \enc{\delsel||\nonce}$. 
    
\noindent\textbf{Step 2:} Law enforcement now presents $\chal$ to the custodians. 
The custodians decrypt $\enc{\delsel||\nonce}$ and reveal the IMEI numbers of the selected delegation by decrypting the entries in $\keydb$ indexed by $\delsel$. 
Recall that the $i$th entry in $\keydb$ is of the form $\keydb[i] = (\vk_i||\enc{\deviceid_{\device_i}}||i)$.
The custodians output the set
\[
	\ids \define \{ \deviceid_{\device_j} ~|~ j \in \delsel \}
\]
\noindent\textbf{Step 3:} Law enforcement now has the IMEI numbers, which uniquely identify each device in the delegation. 
Law enforcement is now responsible for physically locating devices in $\ids$ to obtain a signature. To prove that the selected delegate verification keys in $\ids$ are correct (i.e., are indeed the keys in $\keydb$ indexed by $\delsel$), law enforcement generates a set of Merkle proofs,
\[
	\MerkleProof = \{ {\tt MerklePath}(\keydb[j]) ~|~ j \in \delsel \}
\]
where $\keydb[j]$ is the element at the $j$th index of $\keydb$.  

\noindent\textbf{Step 4:} Law enforcement needs to locate at least $\threshdel$ devices in the set. 
For each delegate device that law enforcement locates, they interface with the secure hardware mailbox interface to send $\nonce$ to the device. 
The delegate device responds  according to {\bf \DelegateSignRequest}.
Define the set of all $\threshdel$ delegate verification keys and signatures of the nonce
\[
	\delselmal = \{i||\delid{\device_i}||\sig_{{\device_i}}(\nonce) ~|~ \text{ for } i \in \delsel \}.
\]
\noindent\textbf{Step 5:} Law enforcement then sends $(\delselmal, \MerkleProof)$ to $\locdevice$ in response to the challenge $\chal$. 
$\locdevice$ verifies the correctness of the response according to $\locdevice.\DeviceUnlock(\delselmal, \MerkleProof)$.

\section{Security \& Privacy}
\label{sec:security}
In this section we analyze the security properties of \scheme in relation to our threat model. 
We then address several privacy concerns which are important to consider in a practical system. 

\subsection{Security Analysis}\label{sec:securityAnalysis}
We present the main theorems we use to prove security in this section but 
defer the proofs to Appendix \ref{apdx:proofs}.

\begin{definition} An \textbf{active device} for an epoch $\epoch$ is a device which successfully transmits its public key to at least one honest custodian. \scheme assumes that a sufficient  number of devices will be active for each epoch even if many devices are not. See Section \ref{sec:practicalSecurity}
\end{definition}

\begin{definition}
	An \textbf{honest} custodian or device is one which correctly follows the protocol defined in Section \ref{sec:scheme} and which is not corrupted by an adversary.
\end{definition}

\begin{restatable}{theorem}{thmsetup}
\label{thm:setup}
Assume a secure public key infrastructure and that at least one custodian is honest.
The procedure $\CustodianConsensus$ for epoch $\epoch$ will either abort or result in a published list $\keydb_\epoch$ and root hash $\keydbheader_\epoch$ which contains entries from each honest active device and only entries from devices signed by the manufacturer $\manu$.
\end{restatable}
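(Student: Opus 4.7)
My plan is to proceed by case analysis on the outcome of $\CustodianConsensus$: since the theorem allows the protocol to abort, the substantive content is about the non-abort case, where I need to show both (a) every honest active device appears in $\keydb_\epoch$ and (b) every entry of $\keydb_\epoch$ carries a valid $\manu$-signature. The single honest custodian guaranteed by hypothesis — call it $C^{*}$ — will serve as a ``gatekeeper'' in both directions, so the overall structure is to argue that if $C^{*}$'s view of the final list is violated on either side, then $C^{*}$ refuses to agree and the protocol aborts by the stated rule (``If at least one custodian does not agree\dots then everyone aborts'').

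For part (a), I would fix an honest active device $\device$ for epoch $\epoch$. By the definition of \textbf{active}, $\device$ transmits $\delid{\device}$ to at least one honest custodian, which must be $C^{*}$. Because $\device$ is honest, the tuple $\delid{\device}$ was produced by a genuine $\DeviceSetup$ run and is therefore signed by $\manu$ (and unique, up to the collision-resistance of the PKI). Hence $C^{*}$'s validation of $\delid{\device}$ succeeds, and $\device$ appears in $C^{*}$'s prospective list $\tilde\keydb_\epoch^{C^{*}}$. Since the published $\keydb_\epoch$ is required to be the union of the \emph{validated} prospective lists, $C^{*}$ will not agree to any candidate $\keydb_\epoch$ that omits $\device$; by the abort rule, any such disagreement terminates the protocol, completing this half of the argument.

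For part (b), I would argue the contrapositive: suppose some entry $e \in \keydb_\epoch$ is not accompanied by a valid $\manu$-signature. Then $e$ cannot be in $C^{*}$'s prospective list (since $C^{*}$ filters on $\sig_{\manu}$-validity), and more importantly, when $C^{*}$ validates the prospective lists of the other custodians, it will reject any list containing $e$. Under the secure-PKI hypothesis, signature forgery against $\vkm$ is infeasible, so ``valid signature'' genuinely certifies provenance from $\manu$. Thus $C^{*}$ refuses to sign off on any $\keydb_\epoch$ containing $e$, and again the protocol aborts.

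The main obstacle is a modeling issue rather than a cryptographic one: the consensus step is described informally, and one must pin down that $C^{*}$'s disapproval is \emph{sufficient} to cause an abort (not merely necessary). I would address this by reading the sentence ``if at least one custodian does not agree\dots then everyone aborts'' as a hard invariant of the procedure, so that $C^{*}$'s local checks (manufacturer-signature validity of every included entry, and inclusion of every $\delid{\device}$ it received from an honest active device) are all that is needed to force either an abort or a list meeting both properties. A minor secondary point is that uniqueness-of-entry checks rely on collision resistance of the verification-key space, which follows from the PKI assumption; beyond that, no further cryptographic reductions are required.
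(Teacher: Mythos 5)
Your proposal is correct and follows essentially the same route as the paper's own proof: in the non-abort case, the honest custodian's validated prospective list forces inclusion of every honest active device, and its validation of the final list forces every entry to carry a manufacturer signature. Your added remark that the honest custodian's disagreement must be \emph{sufficient} to trigger an abort makes explicit a modeling point the paper leaves implicit, but it does not change the argument.
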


\begin{restatable}{lemma}{lemmacustpriv}
\label{lem:custpriv}
Let $\device$ be an honest device with delegation selection $\delsel_\epoch \subseteq \keydb_\epoch$ for epoch $\epoch$.
Then, for all PPT adversaries $\adv$ if there is at least one honest custodian, $\adv$ cannot recover $\delsel$ with probability non-negligibly better than random guessing.
\end{restatable}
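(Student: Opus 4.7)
The plan is to show that the only object leaving the enclave which is a function of $\delsel$ is the ciphertext $\chal = \enc{\delsel \| \nonce}$, and then argue that this ciphertext hides $\delsel$ because at least one custodian is honest.

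First, I would enumerate the transcript that $\adv$ could possibly see during the execution of an honest $\device$. By Assumption~\ref{assumption:enclave}, the enclave is secure, so $\delsel$ is sampled inside the enclave by $\SelectDelegation$ and remains in sealed storage. The only protocol messages in which $\delsel$ appears as an operand are (i) the challenge $\chal \define \mathsf{Enc}_{\pkj}(\delsel \| \nonce)$ output by $\RevealChallenge$, and (ii) the verification checks performed locally inside $\locdevice.\DeviceUnlock$. Message (ii) never leaves the device in plaintext, and the Merkle proofs and signatures that \emph{do} leave the device in the unlock phase are functions of the public list $\keydb$ and of $\nonce$, not of $\delsel$ itself except insofar as $\delsel$ has already been decrypted. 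Hence it suffices to argue that $\chal$ hides $\delsel$ from any $\adv$ that does not know $\skj$.

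Next, I would invoke the security of the DKG protocol~\cite{boneh2001efficient,gennaro2007secure} used to instantiate $(\pkj, \skj)$. The standard guarantee is that as long as at least one custodian is honest and withholds their share, the adversary's view of the joint transcript of the DKG, together with the shares of the corrupted custodians, is simulatable without the honest share; in particular the resulting encryption scheme under $\pkj$ remains IND-CPA secure against $\adv$. Given this, I would carry out a simple reduction: suppose, for contradiction, that there exists a PPT $\adv$ which recovers $\delsel$ from its view with probability $1/\binom{\numd}{\numdel} + \varepsilon$ for non-negligible $\varepsilon$. I construct $\redu$ that plays the IND-CPA game against $\pkj$: $\redu$ picks two independent uniform subsets $\delsel_0, \delsel_1 \subseteq \{1,\dots,\numd\}$ of size $\numdel$ and fresh nonces, submits $(\delsel_0\|\nonce_0, \delsel_1\|\nonce_1)$ to the IND-CPA challenger, receives $c^*$, and simulates the rest of the \scheme{} transcript to $\adv$ using $c^*$ in place of $\chal$ (which $\redu$ can do since every other message is either public or generable from public keys). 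Then $\adv$'s output $\delsel^*$ is compared to $\delsel_0$ and $\delsel_1$; $\redu$ outputs the matching bit (breaking ties uniformly). A routine calculation shows that $\redu$'s advantage is $\Theta(\varepsilon)$, contradicting IND-CPA.

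Finally, I would observe that in the ideal game where $\chal$ is replaced by an encryption of an independent random string, the selection $\delsel$ is \emph{information-theoretically} independent of $\adv$'s view, so $\adv$ can do no better than $1/\binom{\numd}{\numdel}$, which is exactly uniform random guessing over subsets of size $\numdel$. Combining with the previous step gives the desired bound. The main obstacle is a clean invocation of the DKG guarantee: one has to be careful that the adversary's participation in the DKG (through the corrupted custodians) does not leak partial information about $\skj$ beyond what is captured by the standard simulation-based definition. I would handle this by citing the simulation-secure DKGs of~\cite{gennaro2007secure} and noting that the reduction above can be run by the DKG simulator without modification.
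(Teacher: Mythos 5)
Your proposal takes essentially the same route as the paper's proof: the only $\delsel$-dependent value leaving the enclave is the ciphertext $\mathsf{Enc}_{\pkj}(\delsel\|\nonce)$, which remains hidden so long as one custodian withholds its share, reducing the adversary to uniform guessing over the independently sampled selection. Your version is simply more rigorous, making explicit the IND-CPA reduction and the simulation-based DKG guarantee that the paper invokes only informally.
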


\begin{restatable}{lemma}{lemmaneedkeys}
\label{lem:needkeys}
Suppose Theorem \ref{thm:setup} holds for $\keydb_\epoch$ in epoch $\epoch$.
Let $\device$ be a device with delegate selection $\delsel_\epoch \subseteq \keydb_\epoch$ and which has downloaded the header $\keydbheader_\epoch$ of the epoch.
Then for all PPT adversaries $\adv$ without access to $\{ \sk_i ~|~ i \in \delsel\}$, the probability that $\adv$ can unlock $\device$ using \scheme is negligible in the security parameter of the signature scheme.
\end{restatable}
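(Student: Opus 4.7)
The plan is to reduce the adversary's success probability to breaking the existential unforgeability (EUF-CMA) of the underlying signature scheme, using Theorem~\ref{thm:setup} and collision resistance of the Merkle hash to pin down exactly which verification keys the adversary must forge under.

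First I would argue that any $(\delselmal, \MerkleProof)$ accepted by $\locdevice.\DeviceUnlock$ must contain at least $\threshdel$ tuples of the form $(j, \delid{\device_j}, \sig_{\device_j}(\nonce))$ with $j \in \delsel$, each accompanied by a Merkle proof that verifies against the root sitting inside $\keydbheader_\epoch$. Since $\device$ downloaded and signature-checked $\keydbheader_\epoch$, it holds the authentic root; by collision resistance of the Merkle hash, each accepted $\delid{\device_j}$ must equal the genuine entry at position $j$ of $\keydb_\epoch$. By Theorem~\ref{thm:setup}, that entry is the verification key $\vk_j$ of a manufacturer-signed honest device. Hence any successful unlock forces $\adv$ to produce, for at least $\threshdel$ indices $j \in \delsel$, a valid signature on the enclave-generated nonce $\nonce$ under an honest key $\vk_j$ whose secret key $\adv$ does not possess.

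Next I would construct the reduction $\redu$. On input an EUF-CMA challenge key $\vk^\ast$, $\redu$ picks an index $i^\ast \in \delsel$ uniformly at random, runs \scheme setup honestly but programs $\vk_{i^\ast} := \vk^\ast$ into the $i^\ast$-th slot of $\keydb_\epoch$, generates every other delegate key itself, simulates the custodian signature on $\keydbheader_\epoch$, and hands $\adv$ the entire public state together with the challenge $\chal$ emitted by $\locdevice$. Under the lemma's hypothesis that $\adv$ has no access to any $\sk_i$ with $i \in \delsel$, $\redu$ is never required to answer a signing query on behalf of $\vk^\ast$, and the view it presents to $\adv$ is identically distributed to the real game — one honestly generated public key has simply been swapped for another drawn from the same distribution.

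When $\adv$ returns a successful $(\delselmal, \MerkleProof)$, by the first paragraph it contains valid signatures on $\nonce$ under $\vk_j$ for some set $S \subseteq \delsel$ with $|S| \ge \threshdel$. With probability at least $\threshdel/\numdel$ over the independent choice of $i^\ast$, we have $i^\ast \in S$, in which case $\redu$ extracts a valid signature of $\vk^\ast$ on $\nonce$ that was never submitted to its oracle, yielding an EUF-CMA forgery. Any non-negligible success probability for $\adv$ therefore translates into a non-negligible EUF-CMA advantage, contradicting the signature scheme's security and establishing the lemma. The main obstacle I expect is carefully formalizing the hypothesis that $\adv$ has ``no access to $\sk_i$'': in particular, one must argue that this also precludes $\adv$ from coaxing delegate-issued signatures on the specific nonce $\nonce$ out of the honest enclave interface, which is where Assumption~\ref{assumption:enclave} on enclave tamper-resistance is tacitly doing work.
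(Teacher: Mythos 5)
Your proposal is correct and follows essentially the same route as the paper's own (much terser) proof: both reduce an illegitimate unlock to either fabricating a Merkle proof against the authenticated root in $\keydbheader_\epoch$ or forging a delegate signature on the fresh nonce $\nonce$, with your write-up supplying the explicit EUF-CMA reduction and index-guessing step that the paper merely asserts in one line. Your closing caveat --- that the hypothesis must also preclude $\adv$ from obtaining signatures on $\nonce$ through the honest delegates' \DelegateSignRequest{} interface, since honest enclaves will sign any presented nonce --- identifies a real imprecision in the lemma's statement that the paper's proof also glosses over, and is precisely the condition Theorem~\ref{thm:main} later makes explicit as requiring no ``cooperation of an honest delegate.''
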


\begin{restatable}{lemma}{lemmafrac}
\label{lem:frac}
Suppose Theorem \ref{thm:setup} holds for $\keydb_\epoch$ in epoch $\epoch$. Let $\numd = |\keydb_\epoch|$, $\delsel_\epoch$ be the delegate selection of a device and $\adv$ be a PPT adversary. 
Suppose $\adv$ has corrupted at most $\pcorrupt \in (0, \numd)$ devices.

Then, for any probability $\secprob \in (0, 1)$, there exists a delegation size $\numdel$ and threshold $\threshdel$ such that the probability that $\adv$ has corrupted every device in $\delsel_\epoch$ is less than $p$.
\end{restatable}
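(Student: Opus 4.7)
The plan is to model $\locdevice$'s delegate selection as drawing $\numdel$ indices uniformly at random from $\{1,\dots,\numd\}$, so that the number $X$ of corrupted devices in $\delsel_\epoch$ is distributed as a $\mathrm{Bin}(\numdel,\pcorrupt/\numd)$ random variable (with replacement), or as a hypergeometric with the same mean (without replacement). Since the hypergeometric tail is dominated by the corresponding binomial tail for the Chernoff/Hoeffding bounds we need, it suffices to reason about the binomial case.

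First I would apply Hoeffding's inequality to $X$. Writing $c = \pcorrupt/\numd$, we have $c \in (0,1)$ strictly (because $\pcorrupt < \numd$), so there exists $\delta > 0$ with $c + \delta < 1$. Then
\[
\Pr\bigl[X \ge (c+\delta)\numdel\bigr] \le \exp\bigl(-2\numdel\,\delta^2\bigr),
\]
which decays exponentially in $\numdel$. Given a target $\secprob \in (0,1)$, choose any such $\delta$, then pick $\numdel$ large enough that $\exp(-2\numdel\,\delta^2) < \secprob$, and set $\threshdel = \lceil (c+\delta)\numdel \rceil$. Combined with Lemma~\ref{lem:needkeys}, this means $\adv$ has fewer than $\threshdel$ signing keys among $\delsel_\epoch$ except with probability at most $\secprob$, so $\adv$ cannot reach the unlock threshold.

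If the statement is instead read literally as bounding the probability that \emph{every} device in $\delsel_\epoch$ is corrupted, the argument is strictly simpler: that probability is at most $(\pcorrupt/\numd)^{\numdel}$, which is below $\secprob$ whenever $\numdel > \log(\secprob)/\log(\pcorrupt/\numd)$, with $\threshdel$ unconstrained by the bound.

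The main obstacle is not the probabilistic argument itself, which is standard, but the dependence of the resulting parameters on the corruption ratio $\pcorrupt/\numd$. As $\pcorrupt \to \numd$, both the required $\numdel$ and the ratio $\threshdel/\numdel$ blow up, and one must be careful that $\threshdel$ still leaves enough slack below $\numdel$ for the protocol to remain usable when some honest delegates are offline; this tension is what the more quantitative discussion in Section~\ref{sec:practicalSecurity} navigates. Since the lemma only asserts \emph{existence} of suitable $(\numdel,\threshdel)$ for a fixed $\pcorrupt$, the Chernoff-style choice above is enough.
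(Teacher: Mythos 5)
Your proposal is correct, and its second branch (the ``literal reading'') is essentially the paper's own proof: the paper simply sets $\threshdel = \numdel$, observes that the probability the adversary holds all $\numdel$ independently and uniformly chosen delegates is at most $(\pcorrupt/\numd)^{\numdel}$, and notes that since $\pcorrupt/\numd < 1$ this can be driven below any $\secprob$ by taking $\numdel$ large enough. Where you genuinely diverge is in your primary argument: the Hoeffding bound on the number of corrupted delegates, with $\threshdel = \lceil (c+\delta)\numdel\rceil$ for some $\delta$ with $c+\delta<1$. That is a strictly stronger statement than the lemma as written, because it establishes security for a threshold bounded away from $\numdel$ --- exactly the regime the protocol actually needs so that $\numdel-\threshdel$ lost or uncooperative delegates do not block a legitimate unlock. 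The paper only reaches that regime informally, via the exact binomial tail sum in Equation~(1) of Section~\ref{sec:practicalSecurity} evaluated numerically for concrete $(\numdel,\threshdel)$; your Chernoff-style choice gives an asymptotic existence proof for that stronger claim, at the cost of looser constants than the exact sum. Your handling of the with/without-replacement question (hypergeometric tails dominated by binomial) is also more careful than the paper, which glosses over it; note that for the all-corrupted event the without-replacement probability $\prod_{i=0}^{\numdel-1}\frac{\pcorrupt-i}{\numd-i}$ is itself at most $(\pcorrupt/\numd)^{\numdel}$, so the paper's bound is valid either way. No gaps.
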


\begin{restatable}{theorem}{thmmain}
\label{thm:main}
Suppose Theorem \ref{thm:setup} holds for $\keydb_\epoch$ in epoch $\epoch$.
Let $\adv$ be a PPT adversary and $\numd = |\keydb_\epoch|$.

Suppose $\adv$ has corrupted at most $\pcorrupt \in (0, \numd)$ devices. 
Let $\locdevice$ be a device for which $\adv$ is then given a decrypted {\upshape \chal} (for instance from the custodians) after having selected devices to corrupt.
For any $\secprob \in (0,1)$ there exists a delegation size $\numdel$ and threshold $\threshdel$ such that for every such $\locdevice$ there is an independent probability bounded by $p$ that $\adv$ can unlock $\locdevice$ using \scheme without cooperation of an honest delegate.
\end{restatable}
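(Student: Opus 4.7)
The plan is to reduce the event ``$\adv$ unlocks $\locdevice$ without any honest delegate cooperating'' to a purely combinatorial event about the delegate selection $\delsel$, and then to dispatch that event with Lemma \ref{lem:frac}. First, I would invoke Lemma \ref{lem:needkeys}: because Theorem \ref{thm:setup} guarantees a well-formed $\keydb_\epoch$ and $\locdevice$ has downloaded an authentic header $\keydbheader_\epoch$, any transcript accepted by $\locdevice.\DeviceUnlock$ must contain at least $\threshdel$ valid signatures on the nonce $\nonce$ under verification keys indexed by $\delsel$. Except with probability negligible in the signature-scheme security parameter, producing such signatures requires possession of the corresponding $\skd$. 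By Assumption \ref{assumption:enclave} these signing keys live inside sealed enclave storage, so they are available to $\adv$ only for devices it has physically corrupted. Hence, up to negligible error, an honest-cooperation-free unlock of $\locdevice$ requires at least $\threshdel$ of the $\numdel$ devices in $\delsel$ to lie in $\adv$'s corrupt set $\pcorrupt$.

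Next, I would argue that $\delsel$ is distributed independently of $\adv$'s corruption choice. By Lemma \ref{lem:custpriv}, until at least one honest custodian participates in decrypting $\chal$, the selection $\delsel$ is indistinguishable from uniform over size-$\numdel$ subsets of $\keydb_\epoch$ to any PPT $\adv$. Because the theorem fixes $\pcorrupt$ \emph{before} $\adv$ receives the decrypted challenge, the random variable $\delsel$ is information-theoretically independent of $\pcorrupt$. Revealing the decrypted $\chal$ afterwards exposes $\delsel$ but cannot retroactively steer which devices $\adv$ chose to corrupt.

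To finish, I would instantiate parameters. Choose $\threshdel = \numdel$, so that the event of interest collapses to ``every index in $\delsel$ points at a corrupted device,'' which is exactly the event bounded by Lemma \ref{lem:frac}. For any target $\secprob \in (0,1)$, Lemma \ref{lem:frac} yields a delegation size $\numdel$ (with matching threshold $\threshdel = \numdel$) so that this probability is strictly below $\secprob$ whenever $\pcorrupt < \numd$. The ``independent probability'' clause across distinct locked devices follows from the fact that each device runs $\SelectDelegation$ using fresh randomness drawn inside its own secure enclave, so the selections of different $\locdevice$'s are mutually independent and the per-device bound tensorizes.

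The main obstacle I anticipate is formalizing the independence of $\delsel$ from $\adv$'s adaptive corruption strategy when $\adv$ may observe side information about $\locdevice$ (e.g., its registration traffic into $\keydb$, or the DKG transcript of the custodians). This ultimately reduces to Lemma \ref{lem:custpriv}, but the reduction must be chained through a single hybrid view of $\adv$ so that the semantic-security step and the combinatorial step compose without double-counting. A secondary subtlety is making sure the negligible signature-forgery slack from Lemma \ref{lem:needkeys} and the statistical slack from Lemma \ref{lem:frac} are accounted for in one final union bound that still lies below $\secprob$; this is straightforward once both ingredients are in place, but it dictates that $\secprob$ be treated as a concrete (not asymptotic) bound and $\numdel,\threshdel$ be chosen accordingly.
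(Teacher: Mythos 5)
Your proposal is correct and follows essentially the same route as the paper's proof: reduce an honest-delegate-free unlock to possession of the delegate signing keys via Lemma~\ref{lem:needkeys}, use Lemma~\ref{lem:custpriv} to argue the selection $\delsel$ is independent of the corruption set chosen before $\chal$ is decrypted, and bound the resulting combinatorial event with Lemma~\ref{lem:frac} by taking $\threshdel = \numdel$. Your treatment is in fact somewhat more careful than the paper's (explicitly tensorizing per-device independence and flagging the union bound over the negligible forgery slack), but the decomposition and key lemmas are identical.
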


\smparagraph{Privacy Analysis}
It is also important to show that reasonable privacy for devices is maintained by \scheme. 
Recall that the $i$th entry of the list $\keydb$, denoted $\delid{i}$, is

\[
	\delid{i} \define (\vk_i, \enc{\deviceid_{\device_i}})
\]
and corresponds to some $i$th delegate device $\device_i$. Each $\vk_i$ is freshly generated for each new epoch and the publicly identifiable information (the $\deviceid_{\device_i}$ value) is encrypted under the custodians threshold scheme. 
 It is reasonable, however, to assume that a sophisticated adversary could link device identities to $\keydb$ entires through network traffic analysis. 
 This, however, does  not compromise the security of the scheme for honest devices by Lemma \ref{lem:custpriv} and only leaks that a particular device is participating in \scheme.

\begin{figure}
\centering
  \includegraphics[width=\linewidth]{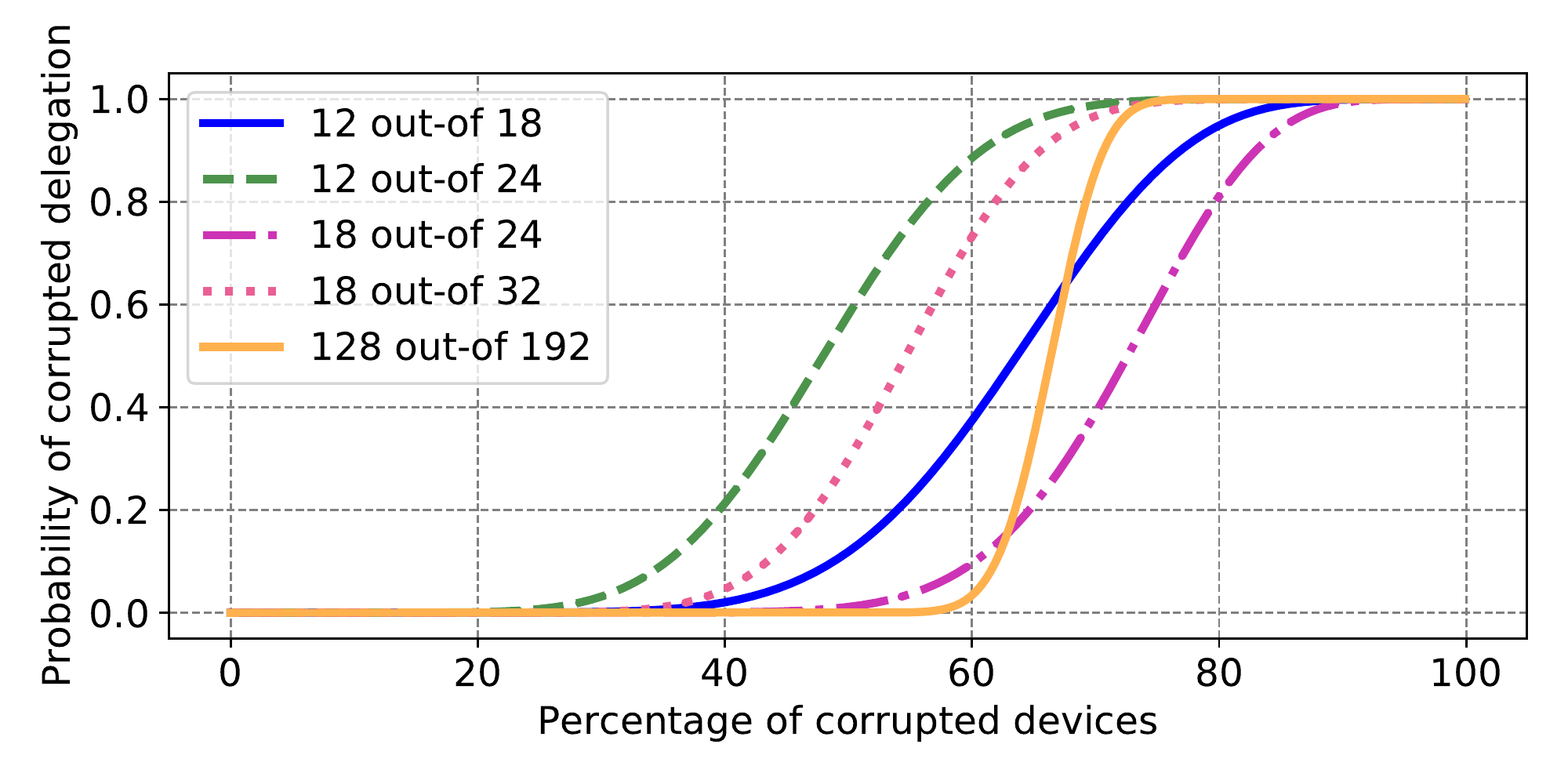}
  \caption{The probability that an adversary will have access to all needed delegate keys given 
  the percentage of all devices it controls in the current epoch. Each curve represents a different threshold of 
  key shares needed for the exceptional access scheme. }
  \label{fig:corrupt}
\end{figure}

\subsection{Practical Security and Concerns} \label{sec:practicalSecurity}

While our proofs of security provide necessary guarantees, there are practical concerns that should be addressed.

\smparagraph{Partial Epoch Participation} 
Devices may miss epochs due to having limited network access, no power, or for any number of other reasons. However, as long as most devices participate in an epoch $\epoch$, the size of $\keydb$ will be sufficient for security. 
The list $\keydb$ for each epoch can be maintained by the custodians and each device can maintain its old verification keys. Devices which have not updated to new epochs can be unlocked using \scheme and the header of the last valid epoch.

\smparagraph{Lost or Uncooperative Delegate devices} For \scheme to be effective, 
it needs to work even if devices on the selected delegation are lost, broken, out of the country or uncooperative. 
In \scheme, we fix a threshold $\threshdel$ of $\numdel$ signatures needed for unlocking. 
Law enforcement needs only to access some $\threshdel$ devices to proceed with an unlock request. 
Therefore, even if $\numdel-\threshdel$ devices are inaccessible, the unlock protocol is able to complete successfully. Additionally, \scheme refreshes the delegation of devices for each epoch (e.g., on a weekly basis), therefore new delegations are comprised only of active devices.

\smparagraph{Practical Levels of Corrupted Devices}
We show in Theorem~\ref{thm:main} that \scheme can be constructed so that probability of an adversary possessing the needed signatures for a single device is arbitrarily small. 
However, we need to show that this is secure even given practical values for the parameters $\threshdel$ and $\numdel$. 

Assuming a device choses its delegation of devices uniformly at random, the probability that an adversary will possess all keys in a delegation is the sum of the probability that it will possess exactly $i \geq \threshdel$ shares multiplied by the number of combinations it can achieve this in. Formally, 

\begin{equation}
\label{eq:practicalcorrupt}
	 p = \sum_{i=\threshdel}^{\numdel} \binom{\numdel}{i} \left( \frac{\pcorrupt}{\numd} 
	 \right)^i \left(1 - \frac{\pcorrupt}{\numd}  \right)^{\numdel-i} 
\end{equation}
We plot Equation~\ref{eq:practicalcorrupt} in Figure~\ref{fig:corrupt} for five values of $\numdel$ and $\threshdel$. 
This is the probability that an adversary can unlock an arbitrary honest device given it controls some percent of all devices. 
We stress that this success probability is independent for each device (e.g. being able to unlock a specific device does not help an adversary unlock other devices).
For reasonable parameters, such as $\numdel=18$ and $t=12$, we obtain $\secprob < 0.01$ even when the adversary controls $35\%$ of all devices. 
For more conservative values, such as $15\%$ of all devices being corrupted, we obtain a probability of $\secprob < 10^{-6}$ which is roughly the same as guessing a $6$ digit pin code.

\smparagraph{Replay Attacks}
When the device generates its unlock $\chal$ through {\bf \DeviceUnlock} it creates a devices specific random nonce $\nonce$.
This nonce is signed by each delegate to complete the unlock challenge. 
In practice, delegate signatures cannot be reused by law enforcement to unlock other devices with different nonces.
In our security analysis in Section \ref{sec:securityAnalysis}; however, we assume that once a device has been corrupted the adversary can use it to sign new requests.

\subsection{Auditing}
An advantage of \scheme is that both the manufacturer and custodians can be audited by one another and by the public. This helps ensure that \scheme is being correctly carried out in practice by quickly revealing anomalous behavior. 

\smparagraph{Auditing Manufacturers}
Under our threat model we assume that a device's manufacture needs to be honest-but-curious, otherwise the device itself could be built in a compromised manner via a supply chain attack.
It's possible, however, that the manufacture could be coerced into malicious participation with an adversary after building a device.

A malicious manufacturer could perform a sybil attack by flooding the system with sybil devices under an adversary's control. Since manufacturers validate the device keys they could take over an arbitrary percentage of all devices.
However, the number of devices in \scheme is public knowledge and collectively the custodians have the power to approve the list $\keydb$. 
Any public auditor observing the resulting $\keydb$ for each epoch would be able to discover a significant spike in new devices. 
Since an adversary must control a large number of all devices, we believe even with a malicious manufacturer it would be difficult for such an attack to succeed without notice.

\smparagraph{Auditing Custodians}
Likewise, devices and the public can verify that the custodians are correctly carrying out \scheme. 
The Merkle hash of $\keydb$ guarantees that the correct public keys are used during unlocking. 
Since $\keydb$ is public, however, any device can additionally verify that its key has been added to $\keydb$.

\subsection{Privacy and Transparency}
There are two crucial sides to consider when analyzing the privacy properties of \scheme: privacy of delegate devices (and the people that own the devices) and the transparency of the exceptional access use. 

\smparagraph{Delegation Privacy}
The privacy of delegates is maintained by only having public keys stored in $\keydb$ while all other uniquely identifying information (such as the IMEI number) remains encrypted and only accessible via consensus of the custodians. A delegate obtains no information about the identity of the device being unlocked only that they were selected as one of its delegates. 

However, \scheme does require law enforcement to contact delegates in order unlock a device. 
We argue that this very real inconvenience is a feature and not a bug. 
Using \scheme requires law enforcement to pay this social cost. 
Every use of \scheme fundamentally must leak the use of exceptional access to ordinary citizens. 
If use of the exceptional access scheme were to be frequent, then it is perhaps good that citizens would be aware and discomforted by this abuse of power. 

Moreover, being called to serve on a traditional jury is already an accepted social responsibility in many democracies. One can view participation in \scheme as an extension of jury duty in the digital sphere. 
Additionally, delegate devices can be requested to provide a digital signature without unlocking or exposing personal information to law enforcement, as explained in Section~\ref{sec:design}.

\smparagraph{Law Enforcement Privacy}
Public information pertaining to on-going cases could severely impede investigation for law enforcement~\cite{frankle2018practical}. \scheme reveals nothing more than the fact that \emph{some device is being unlocked} meaning that law enforcement secrecy is maintained.

\section{Extensions}

\subsection{International Jurisdiction} \label{sec:jurisdiction}
\scheme could present difficult international jurisdiction issues if it were to be implemented in multiple countries.
Suppose a device sold in the US was confiscated in Australia.
If Australian authorities request device access they would need physical access to a number of American delegates.
This could result in complicated legal questions.

A simple solution is to have devices update themselves each epoch to the local $\keydb_{local}$ based on geolocation.
This does not completely solve the problem, but it could mitigate it.
Additionally, it reduces the number of devices that are inaccessible due to international travel.

A more complete solution would be to allow any jurisdiction to unlock any device it has physical access to.
There is a danger this would make it easier for international law enforcement to unlock devices, but this could also be desirable.
Custodians could maintain a seperate $\keydb$ for each jurisdiction.
Additionally, they publish a single Merkle root hash of the union of all $\keydb$'s.
Instead of selecting random indices from a single $\keydb$, devices select a random seed.
When law enforcement query {\bf \DeviceUnlock} they present the root hash of the subtree defined by the $\keydb$ of their jurisdiction.
The indices of the device's delegation are defined by the seed and a predefined fixed pseudo-random  generator.
This way indices can be chosen randomly from any $\keydb$ and each jurisdiction has a valid delegation. 

\subsection{Incentivizing \scheme Participation} \label{ext:participation}
As mentioned in Assumption \ref{assumption:honesty}, it is impossible to force a device to cooperate with any exceptional access scheme. 
For instance, a user could always use a second layer of strong encryption. 
In practice, however, the secure processor is the only means to pair strong physical security with ease of use~\cite{apple2018t2chip, costan2016sanctum, vanbulck2018foreshadow, van2018foreshadow}. 
Directly encrypting the device requires the user to memorize or store a full encryption key which is often impractical.

Modern devices using a secure hardware enclave often use a mechanism such as a pin code or biometric to decrypt the device while maintaining state-of-the-art security \cite{apple2018t2chip}. 
\scheme can therefore tie the use of the secure processor with participation. Every part of \scheme that happens on a device happens in its secure enclave. 
As long as the enclave verifies that it continues to receive valid headers for lists $\keydb_\epoch$, for each new epoch $\epoch$, it can be unlocked using \scheme. 
Likewise, the enclave can verify that its writes are received by the custodians and that it is successfully contributing its delegate verification keys. 
If the enclave fails to update to a new epoch, it can continue to provide \scheme exceptional access on the last valid epoch.
	
If too much time passes since the last updated epoch and the device's password has been correctly used during that time, then the secure enclave can provide a failsafe where any valid unlock request signed by the custodians will unlock the device. 
This timeout period can be set to be a reasonably long period of time (such as six months) to incentivize devices to participate.


\section{Limitations} 
\label{sec:limitations}
We believe that \scheme is a significant advancement compared to previously proposed exceptional access schemes, especially when it comes to reducing trust in any single authority and mitigating the risk of abuse. 
However, we are under no illusion that \scheme is capable or solving all security, practical or ethical problems resulting from deploying exceptional access. 
We discuss a few important, but by no means exhaustive, limitations to \scheme and other similar systems.

\smparagraph{Necessary Political Infrastructure}
In order for \scheme to operate, there needs to be reasonable trust in public institutions selected as custodians and public accountability of the government.
\scheme rests on the assumption that it would be difficult for abusive law enforcement to secretively coerce all custodians or to routinely silence citizens selected as delegates. 
This is certainly not true in all countries especially countries with authoritarian or unstable governments. In these situations implementing \scheme is likely unsuitable.

\smparagraph{Jurisdiction of \scheme}
Today, most devices are designed and assembled in supply chains that cross international borders~\cite{tradewar}. 
Additionally, a handful of large corporations under the jurisdiction of a few powerful countries build and sell devices across the entire world~\cite{bigtech}. 
Consider a device which participates in \scheme, but the custodians and most other peer devices resided in another country. 
Attempting to unlock a device using \scheme could result in complicated legal questions of international sovereignty. 
We discuss potential solutions in Section \ref{sec:jurisdiction}.

\smparagraph{Normalization of Exceptional Access}
If a country were to adopt an ethically designed exceptional access scheme such as \scheme, it could provide cover for less judicious countries to employ insidious backdoors with claims of false equivalence. 
Additionally, if citizens objected to participating as delegates in \scheme, dishonest actors might call to remove the delegate requirement (thereby undermining the entire system), rather than be pressured into reducing the use of exceptional access. 
We discuss the broader ethics of researching exceptional access in Section \ref{sec:intro}.

\section{Conclusion}
\scheme is the first exceptional access scheme which does not place approval for unlocks in a single trusted authority, but rather in the hands of devices participating in the scheme. 
We show that \scheme provides strong security and is resistant to abuse for mass surveillance by tying unlocks to a social cost. 
We present a feasible outline of how \scheme could be implemented in practice and rigorously analyze the security of the protocol. We hope that the ideas presented in this paper will spur discussion on what exceptional access schemes are possible and deemed appropriate by society.
 
\section*{Ackownledgments}
We thank Kyle Hogan, Seny Kamara, Ilia Lebedev, Ben Murphy, Sarah Scheffler, and Mayank Varia for helpful discussions on this topic and providing us with many valuable suggestions on early drafts of this paper. 

\newpage
\hypersetup{urlcolor=black}
\bibliographystyle{plain}
\bibliography{biblio}
\appendix

\section{Proofs}
\label{apdx:proofs}
 
\thmsetup*
  
\begin{proof}

Let $\device$ be an {\bf active} and {\bf honest} device with $\delid{\device}$. Since $\device$ is honest, $\delid{\device}$ has been signed by the manufacture $\manu$.
Since $\device$ is an active device, it will have sent $\delid{\device}$ to some honest custodian $C$ by assumption.

Suppose $\CustodianConsensus$ does not abort. The honest custodian $C$ will validate $\delid{\device}$ and include it in its prospective list $\tilde \keydb_\epoch$. 
Since the procedure does not abort $\delid{\device}$ must be included in the final consensus list $\keydb_\epoch$.

Let $K$ be some entry in the final $\keydb_\epoch$. Since the procedure did not abort, the honest custodian $C$ must have validated the list. Therefore, $K$ has been signed by $\manu$ which concludes the proof.
\end{proof}

\vspace{0.5cm}

\lemmacustpriv*
\begin{proof}
No witness to the selection of $\delsel$ leaves the device until the hardware protocol {\bf $\bm{\device}$.\RevealChallenge$()$} is initiated. 
Furthermore, only $\mathsf{Enc}_{\pkj}(\delsel || \nonce)$ is returned.
Since there is at least one honest custodian, the adversary cannot possess every share needed to decrypt $\enc{\delsel || \nonce}$.
Therefore, $\adv$ cannot decrypt $\enc{\delsel || \nonce}$ with better than negligible probability by property of public key encryption schemes.

Additionally, by construction, each device selects $\delsel$ uniformly and independently at random.
By Assumption \ref{assumption:enclave}, this selection by the secure processor cannot be inferred. 
Therefore, $\adv$ cannot recover $\delsel$ with probability that is better than random guessing.
\end{proof}

\vspace{0.5cm}

\lemmaneedkeys*
\begin{proof}
Since Theorem \ref{thm:setup} holds for $\epoch$, the MerkleProof $\MerkleProof_\epoch$ is signed by the custodians and can be verified by $\device$ using the header $\keydbheader$ signed by the custodians. 
The custodian's signature has a root of trust originating from $\manu$'s signature during the initial custodian setup.
    
The only means to unlock $\device$ through \scheme is through the {\bf$\bm{\device}$.\DeviceUnlock$(\delselmal, \MerkleProof)$} protocol. 
Recall
    \[
    \delselmal = \{j||\delid{\device_j}||\sig_{\device_j}(\nonce) ~|~ j \in \delsel\}
    \] 
    and $\MerkleProof$ is a set of Merkle proofs verifying that $\vk_{\device_j}$ is at index $j$ in $\keydb$ for all $j \in \delsel$.
The proofs $\MerkleProof_\epoch$ can be verified via the Merkle hash in $H_\epoch$ that $\device$ has downloaded for epoch $\epoch$.
    
This input is required by the secure enclave in order to unlock. Suppose the adversary could generate this input \emph{without} access to all of the secret keys $\{ \sk_i ~|~ i \in \delsel\}$ belonging to the delegate devices. The adversary must then be able to either forge the signatures $\sig_i$ or fabricate the Merkle proofs $M$. We assume this only happens with negligible probability~\cite{merkle1987digital, cramer2000signature, hoffstein2001nss}.
\end{proof}

\vspace{0.5cm}

\lemmafrac*
\begin{proof}
Suppose the adversary corrupts some set $\mathcal{R}$ of devices with $|\mathcal{R}| = \pcorrupt$.
By construction, each non corrupted honest device  selects its delegation indices $\delsel$ such that $|\delsel| = \numdel$ uniformly at random and independently from all other devices.
Let $\threshdel = \numdel$. 
Since the indices are chosen uniformly at random no mater how the adversary selects $\mathcal{R}$, the probability that the adversary possess at least $\threshdel$ devices on $\device$'s delegation is bounded by
\begin{equation}
\label{eq:corrupt}
\left(\frac{\pcorrupt}{\numd}\right)^{\numdel}
    \end{equation}
    when $m \leq C$ and is $0$ when $m > C$.
We must therefore be able to select $\numdel$ such that $\left( \frac{\pcorrupt}{\numd} \right)^\numdel < \secprob$. 
In practice, the full probability is exponentially small with reasonable choices for $\numdel$ and $\threshdel$. 
See Section \ref{sec:practicalSecurity} for analysis of practical choices of parameters.
\end{proof}

\vspace{0.5cm}

\thmmain*
\begin{proof}
Let $\device$ be an honest uncorrupted device. Suppose $\adv$ corrupts some set of $\mathcal{R}$ devices with $|\mathcal{R}| = C$.

By Lemma \ref{lem:custpriv}, no matter which other devices the adversary has chosen to corrupt, the adversary cannot select a device on $\device$'s choice of delegation $\delsel$ with better than random chance. 
By Lemma \ref{lem:needkeys}, the adversary must corrupt all delegate devices in $\delsel$. 
By Lemma \ref{lem:frac} there exists some $\numdel$ such that adversary will not have corrupted all of $\delsel$ with probability less than $p$.

Now suppose $\adv$ is is given the decryption to $\chal$ after having selected $\mathcal{R}$.
With delegation size $\numdel$ the adversary needs to corrupt at least one additional honest device with probability $1-p$.
\end{proof}
\section{Additional Extensions}
\subsection{Incorporating Time-lock Mechanisms}
One of the goals outlined of the exceptional access mechanism described in~\cite{savage2018lawful} is to impose a time-lock mechanism to require a certain amount of time (e.g., 3 days~\cite{savage2018lawful}) before giving law enforcement the ability to proceed with the unlock protocol. 
In theory, such a mechanism would mitigate the effectiveness of a ``device swap'' attack (e.g., where the target is away from the phone for a weekend). 
However, upon careful consideration, we do not believe such a mechanism would be an effective deterrent for a dedicated adversary and would likely only be a cause of frustration to law enforcement in an urgent situation (e.g., terrorist attack).
However, we note that a time-lock mechanism can be added to \scheme as a modular extension. 
For example, using recent developments in Verifiable Delay Function~\cite{boneh2018survey, boneh2018verifiable}, law enforcement can provide a proof-of-time-elapse to the device when issuing the unlock token. 
Taking a different approach, we can use the ideas presented in~\cite{wright2018crypto}, to create time-lock puzzles which law enforcement must break to recover a part of the access token. 
Such an approach would both incur a monetary cost and a time-lock cost (as described in~\cite{wright2018crypto}).

\end{document}